\newcommand{\R}{\mathds{R}}
\newcommand{\Z}{\mathds{Z}}
\newcommand{\N}{\mathds{N}}
\newcommand{\1}{\mathds{1}}
\newcommand{\from}{\colon}
\renewcommand{\P}{\mathds{P}}
\newcommand{\loc}{\mathrm{loc}}
\newcommand{\lu}{\mathrm{unif}}
\newcommand{\unif}{\mathrm{unif}}
\newcommand{\Mloc}{\mathcal{M}_{\loc}(\R)}
\newcommand{\M}{\mathcal{M}_{\loc,\unif}(\R)}
\providecommand{\abs}[1]{\left\lvert#1\right\rvert}
\providecommand{\norm}[1]{\left\lVert#1\right\rVert}
\providecommand{\set}[1]{\left\{ #1\right\}}
\theoremstyle{plain} 
\newtheorem{theorem}{Theorem}[section]
\newtheorem{lemma}[theorem]{Lemma}
\newtheorem{proposition}[theorem]{Proposition}
\theoremstyle{definition}
\newtheorem{example}[theorem]{Example}
\newtheorem{definition}{Definition}
\begin{document}

\title{Almost sure purely singular continuous spectrum for quasicrystal models}

\author{C. Seifert$^*$}

\address{Technische Universit\"at Hamburg-Harburg, Institut f\"ur Mathematik,\\
21073, Hamburg, Germany\\
$^*$E-mail: christian.seifert@tuhh.de\\
http://www.mat.tuhh.de/home/cseifert}

\begin{abstract}
We review recent developments in the spectral theory of continuum one-dimensional quasicystals, yielding purely singular continuous spectrum for these Schr\"odinger operators.
Allowing measures as potentials we can generalize some results to very singular
 potentials, including Kronig-Penney type models.
\end{abstract}

\keywords{Schr\"odinger operators, singular continuous spectrum, quasicrystals}


\maketitle
\section{Introduction}
\label{sec:introduction}

The aim of this short note is to review continuum models for quasicrystals in one dimension.
Such models have been studied intensively during the last decades.

Here, we allow for measures $\mu$ as potentials, i.e.\ we are studying the spectral theory of self-adjoint Schr\"odinger operators of the form
\[H_\mu:= -\Delta+\mu\]
in $L_2(\R)$. In order to deal with quasicrystal models the potential $\mu$ 
should satisfy certain properties. Quasicrystals can be considered as media 
between order (i.e., perfect crystals) and disorder (i.e., amorphous media).
On the one hand, quasicrystals are not too far from periodic crystals, so $\mu$
 should be locally close to being periodic, but still globally aperiodic. 
Furthermore, $\mu$ should satisfy a finite local complexity condition (in a 
suitable sense for measures).
On the other hand, quasicrystals share features from disordered media, so the 
corresponding Hamiltonians may be considered as random operators with some 
ergodicity assumptions.
Thus, one is tempted to expect neither pure-point spectrum nor absolutely continuous spectrum for such models.

Combining results on the absolutely continuous spectrum obtained in Ref.~\cite{KlassertLenzStollmann2011} 
(which make use of Remlings Oracle Theorem in Ref.~\cite{Remling2007})
with Gordon-type arguments for singular potentials as in Ref.~\cite{SeifertVogt2014} 
(see also Ref.~\cite{Gordon1976, DamanikStolz2000, Seifert2011} for earlier results) we will conclude generic (in the sense of an ergodic measure)
purely singular continuous spectrum for such operators. We also give a scheme how to produce examples by means of discrete subshifts and a suspension type construction.

\section{Schr\"odinger operators with measures as potentials}
\label{sec:Schroedinger}

We say that
\[
  \mu \from \set{B\subseteq\R;\;B \text{ is a bounded Borel set}} \to \R
\]
is a \emph{local measure} if $\1_K\mu := \mu(\cdot\cap K)$ is a
signed Radon measure for any compact set $K\subseteq \R$.
Then there exist a (unique) nonnegative Radon measure $\nu$ on $\R$ and
a measurable function $\sigma\from\R\to \R$ such that $\abs{\sigma} = 1$ $\nu$-a.e.\ and
$\1_K\mu = \1_K\sigma\nu$ for all compact sets $K\subseteq \R$. The \emph{total variation} of $\mu$ is defined by $\abs{\mu}:=\nu$.
Let $\Mloc$ be the space of all local measures on $\R$.

A local measure $\mu\in \Mloc$ is called \emph{uniformly locally bounded} if
\[
  \norm{\mu}_\lu := \sup_{a\in\R} \abs{\mu}((a,a+1]) < \infty.
\]
Let $\M$ denote the space of all uniformly locally bounded local measures.
The space $\M$ naturally extends $L_{1,\loc,\unif}(\R)$ to measures.

For $\mu\in\M$ let $H_\mu$ be the maximal operator in $L_2(\R)$ associated with $-\Delta+\mu$ (in the distributional sense): define
\begin{align*}
  D(H_\mu) & := \set{u\in L_2(\R)\cap C(\R);\; -u'' + u\mu\in L_2(\R)},\\
  H_\mu u & := -u'' + u\mu.
\end{align*}
Then $H_\mu$ is self-adjoint.
Realizations of $-\Delta+\mu$ can also be defined via form methods (see Ref.~\cite{KlassertLenzStollmann2011}) or Sturm-Liouville theory (see Ref.~\cite{BenAmorRemling2005}).
As shown in Theorem 3.6 of Ref.~\cite{SeifertVogt2014} all these methods lead to the same self-adjoint operator.

\section{Spectral properties of quasicystalline models}

Let us recall some definitions from Ref.~\cite{KlassertLenzStollmann2011} concerning notions of finite local complexity for measures.

\begin{definition}
  A \emph{piece} is a pair $(\nu,I)$ consisting of an interval $I\subseteq\R$ with positive length 
  $\lambda(I) >0$ (which is then called the \emph{length} of the piece) and a signed (local) measure $\nu$ on $\R$ supported on $I$. 
  We abbreviate pieces by $\nu^I$. 
  A \emph{finite piece} is a piece of finite length. 
  We say $\nu^I$ \emph{occurs} in a signed (local) measure $\mu$ at $x\in\R$, if $\1_{[x,x+\lambda(I)]}\mu$ 
  is a translate of $\nu$.

	The \emph{concatenation} $\nu^I=\nu_1^{I_1}\mid \nu_2^{I_2}\mid \ldots$ of a finite or countable family 
    $(\nu_j^{I_j})_{j\in N}$, with $N=\set{1,2,\ldots,\abs{N}}$ (for $N$ 
finite) or $N=\N$ (for $N$ infinite), of finite pieces is defined by
	\begin{align*}
		I & = \left[\min I_1,\min I_1 + \sum_{j\in N} \lambda(I_j)\right],\\
		\nu & = \nu_1+\sum_{j\in N,\,j\geq 2} \nu_j\Big(\cdot-\Big(\min I_1 + \sum_{k=1}^{j-1} \lambda(I_k) - \min I_j\Big)\Big).
	\end{align*}
	We also say that $\nu^I$ is \emph{decomposed} by $(\nu_j^{I_j})_{j\in N}$.
\end{definition}

\begin{definition}
	Let $\mu$ be a signed (local) measure on $\R$. We say that $\mu$ has the \emph{finite decomposition property} (f.d.p.), 
	if there exist a finite set $\mathcal{P}$ of finite pieces (called the \emph{local pieces}) 
	and $x_0\in\R$, such that $\1_{[x_0,\infty)}\mu^{[x_0,\infty)}$ is a translate of a 
	concatenation $v_1^{I_1}\mid \nu_2^{I_2}\mid\ldots$ with $\nu_j^{I_j}\in\mathcal{P}$ for all $j\in\N$. 
	Without restriction, we may assume that $\min I = 0$ for all $\nu^I\in \mathcal{P}$.

	A signed (local) measure $\mu$ has the \emph{simple finite decomposition property} (s.f.d.p.), if it has the f.d.p.~with a decomposition such that there is $\ell>0$ with the following property: Assume that the two pieces
	\[\nu_{-m}^{I_{-m}} \mid \ldots \mid \nu_{0}^{I_{0}} \mid \nu_{1}^{I_{1}} \mid \ldots \mid \nu_{m_1}^{I_{m_1}} \quad \text{and} \quad 
	 \nu_{-m}^{I_{-m}} \mid \ldots \mid \nu_{0}^{I_{0}} \mid \mu_{1}^{J_{1}} \mid \ldots \mid \mu_{m_2}^{J_{m_2}}\]
	occur in the decomposition of $\mu$ with a common first part $\nu_{-m}^{I_{-m}} \mid \ldots \mid \nu_{0}^{I_{0}}$ of length at least $\ell$ and such that
	\[\1_{[0,\ell)}(\nu_{1}^{I_{1}} \mid \ldots \mid \nu_{m_1}^{I_{m_1}}) = \1_{[0,\ell)}(\mu_{1}^{J_{1}} \mid \ldots \mid \mu_{m_2}^{J_{m_2}}),\]
	where $\nu_j^{I_j}$, $\mu_k^{J_k}$ are pieces from the decomposition (in particular, all belong to $\mathcal{P}$ and start at $0$) and the latter two concatenations are of lengths at least $\ell$. Then 
	\[\nu_1^{I_1} = \mu_1^{J_1}.\]
\end{definition}

We can formulate a striking spectral consequence for Schr\"odinger operators with potentials having the s.f.d.p.

\begin{theorem}[{see \cite[Theorem 7.1]{LenzSeifertStollmann2014}}]
  Let $\mu\in\M$ such that $\mu$ and $\mu(-(\cdot))$ have the s.f.d.p.\ and assume that neither $\mu$ nor $\mu(-(\cdot))$ are eventually periodic. Then $\sigma_{ac}(H_\mu) = \emptyset$.
\end{theorem}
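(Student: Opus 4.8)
The plan is to argue by contradiction and to combine Remling's Oracle Theorem with a uniqueness (rigidity) property of reflectionless operators and a combinatorial branching argument that is fed by the s.f.d.p. So suppose that $\sigma_{ac}(H_\mu)\neq\emptyset$ and let $A\subseteq\R$ be an essential support of the absolutely continuous part, so that $\abs{A}>0$. First I would set up the measure analogue of Remling's machinery: working in the Sturm-Liouville framework for measure coefficients (as in Ref.~\cite{BenAmorRemling2005, SeifertVogt2014}), the transfer matrices, the Weyl-$m$-functions and the notion of a potential being reflectionless on $A$ all make sense for potentials in $\M$. The half-line oracle theorem then yields that every right limit $\omega$ of $\mu$ --- that is, every $\omega\in\M$ arising as a local limit $\mu(\cdot+x_n)\to\omega$ along some $x_n\to+\infty$ --- is reflectionless on $A$; the reflected hypothesis on $\mu(-(\cdot))$ ensures the corresponding readability on the left, so that the whole two-sided hull of $\mu$ can be treated symmetrically.

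The second ingredient is the rigidity of reflectionless potentials: if $\omega_1,\omega_2\in\M$ are both reflectionless on a set $A$ of positive Lebesgue measure and $\1_{(-\infty,0]}\omega_1=\1_{(-\infty,0]}\omega_2$, then $\omega_1=\omega_2$. This is the measure version of the classical fact that a reflectionless operator is determined by its restriction to any half-line, and I would derive it from the coincidence of the two Weyl functions forced by the relation $m_+=-\overline{m_-}$ on $A$. Granting this, it suffices to exhibit two \emph{distinct} right limits of $\mu$ that agree on a half-line: both are reflectionless on $A$ by the previous step, so their mere existence contradicts the rigidity and forces $\sigma_{ac}(H_\mu)=\emptyset$.

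Constructing such a pair is the combinatorial heart. By the f.d.p., $\1_{[x_0,\infty)}\mu$ is (a translate of) a concatenation of pieces from the finite set $\mathcal{P}$, and since $\mu$ is not eventually periodic, the associated one-sided sequence of pieces is not eventually periodic either; by a Morse-Hedlund argument it therefore admits, for every $n$, a right-special word of length $n$, i.e.\ a word occurring with (at least) two different continuing pieces. Choosing such branchings at positions tending to $+\infty$ and invoking compactness of the hull (guaranteed by finite local complexity), I would extract two right limits whose piece sequences coincide on the entire left ray but differ at the first piece to the right of the origin. The role of the s.f.d.p.\ (for $\mu$, and by symmetry for $\mu(-(\cdot))$) is exactly to make this symbolic statement faithful at the level of measures: once the common left part is longer than the threshold $\ell$, the decomposition into pieces, and hence the positions of the piece boundaries, are unambiguously determined by the measure, so that symbolic agreement on the left ray becomes genuine measure agreement $\1_{(-\infty,0]}\omega_1=\1_{(-\infty,0]}\omega_2$ while the symbolic branching becomes $\omega_1\neq\omega_2$.

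I expect the main obstacle to be precisely this last translation step, i.e.\ controlling the interface at the origin. Unlike the discrete (alphabet) setting, for measures the piece boundaries are not visible a priori, a single piece may straddle $0$, and distinct piece sequences could a priori produce the same measure; the s.f.d.p.\ is the hypothesis tailored to rule this out, and the delicate point is to verify that the common left part of length $\geq\ell$ produced by the construction really forces measure-level coincidence on $(-\infty,0]$ without spurious cancellations, and that the branching survives as $\omega_1\neq\omega_2$. The remaining analytic inputs --- the measure versions of the oracle theorem and of reflectionless rigidity --- I would import from the Sturm-Liouville theory of Ref.~\cite{BenAmorRemling2005} and the spectral setup of Ref.~\cite{KlassertLenzStollmann2011, SeifertVogt2014, Remling2007}.
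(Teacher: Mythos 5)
The paper does not prove this theorem itself but imports it from \cite[Theorem 7.1]{LenzSeifertStollmann2014}, and your outline reconstructs exactly the strategy used there (and in \cite[Theorem 4.1]{KlassertLenzStollmann2011}): Remling's oracle theorem forces all right limits to be reflectionless on an essential support of the a.c.\ spectrum, reflectionless rigidity says such a limit is determined by its restriction to a half-line, and the s.f.d.p.\ together with non-eventual-periodicity produces two distinct right limits agreeing on $(-\infty,0]$, a contradiction. Your sketch correctly identifies the delicate points (the measure-theoretic versions of the oracle theorem and of rigidity, and the passage from symbolic branching to genuine measure-level disagreement at the origin), so it is essentially the same approach as the cited proof.
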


As noted in the introduction Schr\"odinger operators modelling quasicrystals 
can be thought of as random operators. To this end, let $\Omega\subseteq \M$ be
 $\norm{\cdot}_\lu$-bounded and closed with respect to the vague topology.
Then $\Omega$ is compact and hence metrizable (see \cite[Proposition 4.1.2]{Seifert2012}). Furthermore, assume that $\Omega$ is translation invariant, i.e., we have a group action 
$\alpha\from\R\times\Omega\to\Omega$, $\alpha(t,\omega):=\alpha_t(\omega):= \omega(\cdot+t)$ acting on $\Omega$. Note that $\alpha$ is continuous.
For $\omega\in\Omega$ define $H_\omega$ as above. Note that 
$(H_\omega)_{\omega\in\Omega}$ is a random operator (see \cite[Theorem 
3.6]{LenzSeifertStollmann2014}) and it is an ergodic operator family if 
$(\Omega,\alpha)$ is ergodic with ergodic measure $\P$.
Recall that $(\Omega,\alpha)$ is called minimal if every orbit $\set{\alpha_t(\omega);\; t\in\R}$ is dense in $\Omega$. Both ergodicity and minimality are suitable assumptions for quasicrystalline models.

\begin{proposition}[{see \cite[Theorem 5.1]{KlassertLenzStollmann2011}}]
\label{prop:Sigma_ac_empty}
  Let $(\Omega,\alpha)$ be minimal, having the
    s.f.d.p.~(i.e.\ for every $\omega\in\Omega$: $\omega$ and $\omega(-(\cdot))$ has s.f.d.p.),
  and aperiodic (i.e.\ there exists $\omega\in\Omega$ which is not periodic). 
  Then $\sigma_{ac}(H_\omega) = \emptyset$ for all $\omega\in\Omega$.
\end{proposition}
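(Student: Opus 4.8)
The plan is to deduce the statement for every $\omega\in\Omega$ from the single-operator Theorem stated above. Its s.f.d.p.\ hypotheses are built into the assumptions of the proposition, so the only point that needs to be established is that, for each $\omega\in\Omega$, neither $\omega$ nor $\omega(-(\cdot))$ is eventually periodic; once this is in place, the Theorem applies verbatim to each $\omega$ and yields $\sigma_{ac}(H_\omega)=\emptyset$. Thus the whole argument reduces to a purely dynamical statement about $(\Omega,\alpha)$: minimality together with the existence of one non-periodic element forces the absence of \emph{eventually} periodic elements.

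First I would record that in a minimal system aperiodicity is equivalent to the absence of periodic points. Indeed, if some $\omega_0\in\Omega$ satisfied $\alpha_{T}\omega_0=\omega_0$ for some $T>0$, then its orbit $\set{\alpha_t\omega_0;\; t\in\R}$ would be the continuous image of the compact group $\R/T\Z$, hence compact and therefore closed; by minimality this closed orbit would be all of $\Omega$, so every element of $\Omega$ would be a translate of $\omega_0$ and thus periodic, contradicting the assumed existence of a non-periodic $\omega$. Hence no $\omega\in\Omega$ is periodic.

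The core step is to upgrade this to the exclusion of eventual periodicity. Suppose, for contradiction, that some $\omega\in\Omega$ were eventually periodic to the right, i.e.\ $\1_{[x_0,\infty)}\omega$ agrees with the restriction of a $T$-periodic local measure $\mu_{\mathrm{per}}$ for some $x_0\in\R$, $T>0$. I would then consider the translates $\alpha_{nT}\omega=\omega(\cdot+nT)$ and check that, on any fixed compact set $K$, one has $\1_K\alpha_{nT}\omega=\1_K\mu_{\mathrm{per}}$ as soon as $K+nT\subseteq[x_0,\infty)$, using the $T$-periodicity of $\mu_{\mathrm{per}}$. This gives vague convergence $\alpha_{nT}\omega\to\mu_{\mathrm{per}}$, and since $\Omega$ is closed with respect to the vague topology and $\alpha$-invariant, $\mu_{\mathrm{per}}\in\Omega$. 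But $\mu_{\mathrm{per}}$ is a genuine periodic point, which the previous paragraph has ruled out. The case of eventual periodicity to the left --- equivalently, eventual periodicity of $\omega(-(\cdot))$ --- is handled symmetrically by passing to the translates $\alpha_{-nT}\omega$ and letting $n\to\infty$. Hence for every $\omega\in\Omega$ neither $\omega$ nor $\omega(-(\cdot))$ is eventually periodic, and the Theorem finishes the proof.

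The main obstacle is the construction and control of the vague limit in the core step: one has to make precise the notion of eventual periodicity for local measures (matching the one entering the Theorem's hypothesis on both tails) and verify that the half-line periodicity really produces, in the vague limit of the orbit, a fully $T$-periodic element of the hull. The remaining ingredients --- that a periodic orbit is closed and that minimality then forces it to exhaust $\Omega$ --- are routine once the compactness and metrizability of $\Omega$ and the continuity of $\alpha$ recalled before the proposition are invoked.
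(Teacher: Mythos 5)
Your argument is correct and is essentially the paper's own proof: both rest on the same dynamical core, namely that an eventually periodic element would produce, as a vague limit of its translates, a genuinely periodic element of the (closed, translation-invariant) hull, and that minimality plus continuity of $\alpha$ would then force every element of $\Omega$ to be periodic, contradicting aperiodicity. The only cosmetic difference is that you feed this into the displayed single-operator Theorem for each $\omega$, whereas the paper runs the contradiction through \cite[Theorem 4.1]{KlassertLenzStollmann2011} starting from the assumption that some $H_\omega$ has nonempty absolutely continuous spectrum.
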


\begin{proof}
  Assume that $\set{\omega\in\Omega;\; \sigma_{ac}(H_\omega)\neq\emptyset}$ in nonempty. 
  Then by \cite[Theorem 4.1]{KlassertLenzStollmann2011}, the set $\set{\omega\in\Omega;\; \text{$\omega$ or $\omega(-(\cdot))$ is 
  eventually periodic}}$ 
  is nonempty as well. W.l.o.g.~assume that $\omega$ is periodic for $t\geq t_0$ with period $p$. 
  By closedness of $\Omega$, 
  \[\tilde{\omega}:= \lim_{t\to\infty} \alpha_{t}(\omega) = \lim_{t\to\infty} \omega(\cdot+t)\in \Omega\]
  and $\tilde{\omega}$ is periodic with period $p$.
  By minimality, for any $\omega'\in\Omega$ there exists $(t_n)$ in $\R$ such that
  $\alpha_{t_n}(\tilde{\omega})\to \omega'$. Since $\tilde{\omega}$ is periodic and $\alpha$ is continuous, we arrive at
  \[\alpha_p(\omega') = \alpha_p\left(\lim_{n\to\infty} \alpha_{t_n}(\tilde{\omega})\right) = \lim_{n\to\infty} \alpha_{t_n}\alpha_p(\tilde{\omega}) = \omega'.\]
  So, every $\omega\in\Omega$ must be periodic with the same period, a contradiction.
\end{proof}

Let us now focus on the pure point spectrum. As mentioned in the introduction, potentials modelling quasicrystals are close to periodic potentials.
The following condition due to Kaminaga\cite{Kaminaga1996} suits very 
well for ``local periodicity''.

\begin{definition}
  Let $(\Omega,\alpha,\P)$ be ergodic. We say that it satisfies condition (K) 
  if there exists $(p_n)$ in $(0,\infty)$ with $p_n\to \infty$ such that
  \[G_n:= \set{\omega\in\Omega;\; \1_{[0,p_n]}\omega = \1_{[0,p_n]}\alpha_{p_n}(\omega) = \1_{[0,p_n]}\alpha_{-p_n}(\omega)} \quad(n\in\N)\]
  satisfies
  $\limsup_{n\to\infty} \P(G_n) > 0$.
\end{definition}
Note that the set $G_n$ contains all measures $\omega\in\Omega$ for which the three pieces of the intervals $[-p_n,0]$, $[0,p_n]$ and $[p_n,2p_n]$ are equal, i.e.\ which are locally close to being periodic.

\begin{lemma}
\label{lem:(K)_absence_ev}
Let $(\Omega,\alpha,\P)$ be ergodic satisfying (K). Then for $\P$-a.a.~$\omega\in\Omega$ we have $\sigma_{pp}(H_\omega) = \emptyset$,
i.e., $H_\omega$ does not have any eigenvalues $\P$-a.s.
\end{lemma}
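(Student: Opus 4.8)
The plan is to run an ergodic Gordon argument: condition (K) furnishes arbitrarily long exact three-fold repetitions of the potential at scales $p_n\to\infty$ with positive asymptotic frequency, and such repetitions obstruct the existence of $L_2$-eigenfunctions through a transfer-matrix estimate.

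First I would reduce the claim to a positive-measure statement. The property ``$\sigma_{pp}(H_\omega) = \emptyset$'' is invariant under the action $\alpha$: for each $t\in\R$ the operator $H_{\alpha_t(\omega)}$ is unitarily equivalent to $H_\omega$ via the translation unitary on $L_2(\R)$, so both have the same point spectrum. Hence the set $N_0:=\set{\omega\in\Omega;\; \sigma_{pp}(H_\omega)=\emptyset}$ is $\alpha$-invariant, and it is measurable because $(H_\omega)_{\omega\in\Omega}$ is a random operator. By ergodicity of $(\Omega,\alpha,\P)$ we get $\P(N_0)\in\set{0,1}$, so it suffices to prove $\P(N_0)>0$.

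Next I would produce a positive-measure set on which the Gordon condition holds. Put $G:=\limsup_{n\to\infty} G_n = \bigcap_{N\in\N}\bigcup_{n\geq N} G_n$, the set of $\omega$ lying in $G_n$ for infinitely many $n$. Since $\P$ is a probability measure, the reverse Fatou lemma for sets gives
\[\P(G)\geq \limsup_{n\to\infty}\P(G_n) > 0.\]
For $\omega\in G_n$ the three consecutive blocks $\1_{[-p_n,0]}\omega$, $\1_{[0,p_n]}\omega$ and $\1_{[p_n,2p_n]}\omega$ are mutual translates, i.e.\ $\omega$ agrees on the window $[-p_n,2p_n]$ with a measure of period $p_n$; for $\omega\in G$ this occurs along a sequence $p_n\to\infty$.

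Finally I would invoke the Gordon-type lemma for measure potentials from Ref.~\cite{SeifertVogt2014} (see also Ref.~\cite{Seifert2011}). Its hypothesis is exactly the existence of scales $p_n\to\infty$ along which three consecutive exact copies of a block of length $p_n$ occur around the origin, which is precisely membership in $G$; its conclusion is that $H_\omega$ has no eigenvalues. Thus $G\subseteq N_0$, so $\P(N_0)\geq\P(G)>0$, and the $0$--$1$ dichotomy upgrades this to $\P(N_0)=1$. The main obstacle is the Gordon lemma itself: one must show, using the $SL(2,\R)$ transfer matrices of the measure Sturm--Liouville problem together with the Cayley--Hamilton identity, that an $L_2$-solution at energy $E$ cannot survive three exact repetitions at unboundedly large scales. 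Granting that estimate (citable for $\mu\in\M$), the only remaining point requiring care is to check that $\omega\in G_n$ supplies its precise hypotheses, in particular the correct placement of the repeated block relative to the point at which the putative eigenfunction is controlled.
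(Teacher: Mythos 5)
Your proposal is correct and follows essentially the same route as the paper: the $\alpha$-invariance of $\set{\omega;\;\sigma_{pp}(H_\omega)=\emptyset}$ plus ergodicity gives the $0$--$1$ dichotomy, reverse Fatou yields $\P(\limsup_n G_n)\geq\limsup_n\P(G_n)>0$, and the Gordon-type result of Ref.~\cite{SeifertVogt2014} (Corollary 5.5 there) gives $\limsup_n G_n\subseteq\set{\omega;\;\sigma_{pp}(H_\omega)=\emptyset}$. The paper likewise treats the Gordon estimate as a citation rather than reproving it, so no further work is needed on that point.
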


\begin{proof}
  Let 
  $\Omega_c:=\set{\omega\in\Omega;\; \sigma_{pp}(H_\omega) = \emptyset}$.
  Note that $\Omega_c$ is $\alpha$-invariant.
  Ergodicity implies $\P(\Omega_c)\in \set{0,1}$. 
  By a Gordon type argument, see \cite[Corollary 5.5]{SeifertVogt2014}, we have
  \[G:= \limsup_{n\to\infty} G_n = \bigcap_{n\in\N} \bigcup_{k=n}^\infty G_k\subseteq \Omega_c.\]
  Hence,
  \[\P(\Omega_c) \geq \P(G) = \P(\limsup_{n\to\infty} G_n)\geq \limsup_{n\to\infty} \P(G_n)>0.\]
  We conclude that $\P(\Omega_c) = 1$.
\end{proof}

We can now state our main theorem: the almost sure purely singular continuous spectrum for quasicrystalline models.

\begin{theorem}
  Let $(\Omega,\alpha,\P)$ be ergodic and minimal, having the s.f.d.p., aperiodic, and satisfying (K). Then $H_\omega$ has purely singular continuous spectrum for $\P$-almost all $\omega\in\Omega$.
\end{theorem}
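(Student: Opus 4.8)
The plan is to assemble the conclusion from three facts already established in the excerpt, so the main work is bookkeeping about how the hypotheses distribute over the three parts of the spectrum. Recall that the spectrum decomposes as $\sigma(H_\omega) = \sigma_{ac}(H_\omega) \cup \sigma_{sc}(H_\omega) \cup \sigma_{pp}(H_\omega)$, and that ``purely singular continuous'' means $\sigma_{ac}(H_\omega) = \emptyset$, $\sigma_{pp}(H_\omega) = \emptyset$, while $\sigma(H_\omega) \neq \emptyset$. I would prove each of these three assertions in turn for $\P$-almost all $\omega$, and then intersect the (at most countably many) full-measure sets.

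First I would dispatch the absolutely continuous part. Since $(\Omega,\alpha)$ is minimal, has the s.f.d.p., and is aperiodic, Proposition~\ref{prop:Sigma_ac_empty} applies verbatim and yields $\sigma_{ac}(H_\omega) = \emptyset$ for \emph{every} $\omega\in\Omega$, not merely almost every one. Next I would handle the point spectrum: because $(\Omega,\alpha,\P)$ is ergodic and satisfies condition (K), Lemma~\ref{lem:(K)_absence_ev} gives $\sigma_{pp}(H_\omega) = \emptyset$ for $\P$-almost all $\omega$. Let $\Omega_0$ denote that full-measure set on which eigenvalues are absent; since $\sigma_{ac}$ is empty on all of $\Omega$, every $\omega\in\Omega_0$ already satisfies $\sigma(H_\omega) = \sigma_{sc}(H_\omega)$, i.e.\ the spectrum is purely singular continuous \emph{provided} the spectrum is nonempty.

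The one remaining point that deserves care is nonemptiness of the spectrum, which is what distinguishes ``purely singular continuous'' from the degenerate empty case. For a self-adjoint operator $H_\omega$ on the infinite-dimensional space $L_2(\R)$ the spectrum cannot be empty, so $\sigma(H_\omega)\neq\emptyset$ automatically; thus on $\Omega_0$ we genuinely have $\emptyset\neq\sigma(H_\omega)=\sigma_{sc}(H_\omega)$. I expect this step to be the least delicate, and the main conceptual obstacle is really just confirming that the hypotheses of the two cited results are simultaneously available—minimality and aperiodicity feed the $\sigma_{ac}$ statement while ergodicity and (K) feed the $\sigma_{pp}$ statement, and all four are assumed in the theorem, so no compatibility issue arises.

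Finally I would set $\Omega_{sc} := \Omega_0$, the full-measure set produced by Lemma~\ref{lem:(K)_absence_ev}; on it all three required properties hold, so $\P(\Omega_{sc}) = 1$ and $H_\omega$ has purely singular continuous spectrum for $\P$-almost all $\omega\in\Omega$. The proof is thus a short synthesis: the only genuinely nontrivial inputs, the Remling-type argument behind $\sigma_{ac}=\emptyset$ and the Gordon-type argument behind $\sigma_{pp}=\emptyset$, are already packaged in Proposition~\ref{prop:Sigma_ac_empty} and Lemma~\ref{lem:(K)_absence_ev}, and the remaining content is the observation that their conclusions overlap on a set of full measure.
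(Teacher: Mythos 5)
Your proposal is correct and follows essentially the same route as the paper: invoke Proposition~\ref{prop:Sigma_ac_empty} for $\sigma_{ac}(H_\omega)=\emptyset$ on all of $\Omega$, Lemma~\ref{lem:(K)_absence_ev} for $\sigma_{pp}(H_\omega)=\emptyset$ almost surely, and combine. Your extra remark on nonemptiness of the spectrum of a self-adjoint operator is a harmless refinement the paper leaves implicit.
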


\begin{proof}
  By Proposition \ref{prop:Sigma_ac_empty} we obtain that $\sigma_{ac}(H_\omega) = \emptyset$ for all $\omega\in\Omega$.
  Lemma \ref{lem:(K)_absence_ev} yields $\sigma_{pp}(H_\omega) = \emptyset$ for $\P$-a.a.\ $\omega\in\Omega$.
  Hence, for $\P$-a.a.~$\omega\in\Omega$, $H_\omega$ has purely singular continuous spectrum.
\end{proof}

This result is as far as we know the most general version;
previously known results as stated in \cite[Theorem 
7.4]{KlassertLenzStollmann2011} only work for $L_{1,\loc}$-potentials.

\begin{example}
  Let $A$ be a finite set equipped with the discrete topology. Let $X\subseteq A^\Z$ be closed and invariant under the shift $S\from A^\Z\to A^\Z$, $Sx(n):=x(n+1)$.
  Then $(X,S)$ is called a subshift. 
  We use a suspension type consruction.
  For $a\in A$ choose $\nu_a\in \M$ supported on $[0,l_a]$. For $x\in X$ define $\omega_x\in \M$ by
\[\omega_x:= \sum_{n\in\N_0} \delta_{\sum_{k=0}^{n-1} l_{x(k)}} * \nu_{x(n)} + \sum_{n\in\N} \delta_{\sum_{k=-n}^{-1} -l_{x(k)}} * \nu_{x(-n)}.\]
Let
$
\Omega:= \set{\alpha_{t}(\omega_x);\; x\in X,\, t\in\R}.
$
As shown in \cite[Proposition 8.2]{LenzSeifertStollmann2014} many properties of $(X,S)$ such as ergodicity and minimality may be transferred to $(\Omega,\alpha)$. Furthermore, $(\Omega,\alpha)$ has the s.f.d.p., if at most one of the $\nu_a$'s is a multiple of Lebesgue measure.
Furthermore, condition (K) can also be checked by an analogue condition for 
$(X,S)$, see \cite[Proposition 5]{KlassertLenzStollmann2011}.

Thus, we can generate many examples including Kronig-Penney type models (where the potentials consist of $\delta$-peaks on a quasicrystalline lattice) by choosing appropriate $\nu_a$'s and lengths $l_a$'s.
\end{example}

\bibliographystyle{ws-procs975x65}

\end{document}